\newtheorem{te}{Theorem}
\newtheorem{pr}[te]{Proposition}
\newtheorem{df}[te]{Definition}
\newtheorem{ex}[te]{Example}
\newcommand{\dep}{\ll}
\newcommand{\sCON}{\mathit{CON}}
\newcommand{\la}{\leftarrow}
\newcommand{\sLit}{\mathit{Lit}}
\newcommand{\snot}{\mathit{not} \;}
\newcommand{\sObj}{\mathit{Obj}}
\newcommand{\sDef}{\mathit{Def}}
\newcommand{\shead}{\mathit{head}}
\newcommand{\sbody}{\mathit{body}}
\newcommand{\sbodym}[1]{\mathit{(body(#1))^-}}
\newcommand{\sbodyp}[1]{\mathit{(body(#1))^+}}
\newcommand{\sP}{\mathit{pos}}
\newcommand{\sAt}{\mathit{At}}
\newcommand{\sSM}{\mathit{AS}}
\newcommand{\sM} [1] {\mathcal A_#1}
\newcommand{\argStA} [1] {\mathcal A_#1 = \langle Y_#1 \hookleftarrow X_#1 \rangle}
\newcommand{\argStC} [1] {\mathcal A_#1 = \langle \{ \shead(r_#1) \} \hookleftarrow \sbodym{r_#1} ; \sbodyp{r_#1} \rangle}
\newcommand{\argStSimple}[2] { \langle #1 \hookleftarrow #2 \rangle}
\newcommand{\argStFull}[3] {\langle #1 \hookleftarrow #2 ; #3 \rangle}
\newcommand{\argStName}[1]{\mathcal #1}
\newcommand{\sset}[1]{ \{ #1 \} }
\newcommand{\sCn}[2]{\mathit{Cn}_{\dep_{#1}}(#2)}
\title{Warranted Derivations of Preferred Answer Sets}
\author
{J\'an \v{S}efr\'anek and Alexander \v{S}imko}
\institute{Comenius University, Bratislava, Slovakia;  [sefranek,simko]@fmph.uniba.sk}
\begin{document}

\maketitle

\begin{abstract}
We are aiming at a semantics of logic programs with preferences defined on rules, which always selects a preferred answer set, if there is a non-empty set of (standard) answer sets of the given program. 

It is shown in a seminal paper by Brewka and Eiter that the goal mentioned above is incompatible with their second principle and it is not satisfied in their semantics of prioritized logic programs. Similarly, also according to other established semantics, based on a prescriptive approach, there are programs with standard answer sets, but without preferred answer sets.
 
According to the standard prescriptive approach no rule can be fired before a more preferred rule, unless the more preferred rule is  blocked. This is a rather imperative approach, in its spirit.

In our approach, rules can be blocked by more preferred rules, but the rules which are not blocked are handled in a more declarative style, their execution does not depend on the given preference relation on the rules.

An argumentation framework (different from the Dung's framework) is proposed in this paper. Argumentation structures are derived from the rules of a given program. An attack relation on argumentation structures is defined, which is derived from attacks of more preferred rules against the less preferred rules. Preferred answer sets correspond to complete argumentation structures, which are not blocked by other complete argumentation structures.
\end{abstract}

Keywords: extended logic program, answer set, preference, preferred answer set, argumentation structure 

\section{Introduction}

The meaning of a nonmonotonic theory is often characterized by a set of {\em (alternative) belief sets}. It is appropriate to select sometimes some of the belief sets as more {\em preferred}. 

We are focused in this paper on {\em extended logic programs} with a preference relation on rules, see, e.g., \cite{be,dst,sw,w}. Such programs are denoted by the term {\em prioritized} logic programs in this paper.

It is suitable to require that some preferred answer sets can be {\em selected} from a non-empty set of standard answer sets of a (prioritized) logic program.


Unfortunately, there are prioritized logic programs with standard answer sets, but without preferred answer sets
according to the semantics of \cite{be} and also of \cite{dst} or \cite{w}. This feature is a consequence of the {\em prescriptive} approach to preference handling \cite{class}. According to that approach no rule can be fired before a more preferred rule, unless the more preferred rule is blocked. This is a rather imperative approach, in its spirit.

An investigation of basic {\em principles} which should be satisfied by any system containing a preference relation on defeasible rules is of fundamental importance. This type of research has been initialized in the seminal paper \cite{be}. Two basic principles are accepted in the paper.

The second of the principles is violated, if a function is assumed, which always
selects a non-empty subset of preferred answer sets from a non-empty set of all standard answer sets of a prioritized logic program.


We believe that the possibility to select always a preferred answer set from a non-empty set of standard answer sets is of critical importance. This goal requires to accept a {\em descriptive} approach to preference handling. The approach is characterized in \cite{ds,class} as follows: The order in which rules are applied, reflects their ``desirability'' -- it is desirable to apply the most preferred rules. 

In our approach, rules can be {\em blocked} by more preferred rules, but the rules which are not blocked are handled in a more declarative style. If we express this in terms of desirability, it is desirable to apply all (applicable) rules, which are not blocked by a more preferred rule.
 The execution of non-blocked rules does not depend on their order. Dependencies of more preferred rules on less preferred rules do not prevent the execution of non-blocked rules. 

Our goal is: 
\begin{itemize}
\item to modify the Principles proposed by \cite{be} in such a way that they do not contradict a selection of a non-empty set of preferred answer sets from the underlying non-empty set of standard answer sets,
\item to introduce a notion of preferred answer sets that
 satisfies the above mentioned modification.
\end{itemize}


The proposed method is inspired by \cite{gs}. While there defeasible rules are treated as (defeasible) arguments, here (defeasible) assumptions, sets of default negations, are considered as arguments. Reasoning about preferences in a logic program is here understood as a kind of argumentation. Sets of default literals can be viewed as defeasible arguments, which may be contradicted by consequences of some applicable rules. The preference relation on rules is used in order to ignore the attacks of less preferred arguments against more preferred arguments. The core problem is to transfer the preference relation defined on rules to argumentation structures and, consequently, to answer sets.\footnote{Our intuitions connected to the notion of argumentation structure and also the used constructions are different from Dung's arguments or from arguments of \cite{gs,ca}. On the other hand, we plan an elaboration of presented constructions aiming at a theory, which generalizes Dung's abstract argumentation framework, TMS, constructions given, e.g., by \cite{gs} or \cite{ca}. Anyway, this paper does not present a contribution to argumentation theory.}

The basic argumentation structures correspond to the rules of a given program. Derivation rules, which enable derivation of argumentation structures from the basic ones are defined. 
That derivation leads from the basic argumentation structures (corresponding to the rules of a given program) to argumentation structures corresponding to the rules of an negative equivalent of the given program introduced in \cite{dt}. 

Attacks of more preferred rules against the less preferred rules are transferred via another set of derivation rules to the attacks between argumentation structures. 
Preferred answer sets are defined over that background. They correspond to complete and non-blocked (warranted) argumentation structures.


The contributions of this paper are summarized as follows. A modified set of principles for preferred answer set specification is proposed. A new type of argumentation framework is constructed, which enables a selection of preferred answer sets. There are basic arguments (argumentation structures) and attacks in the framework and also derived arguments and attacks. Rules for derivation of argumentation structures and rules for derivation of attacks of some argumentation structures against other argumentation structures are defined. Preferred answer sets are defined in terms of complete and non-blocked (warranted) argumentation structures.  Finally, we emphasize that each program with a non-empty set of answer sets has a preferred answer set. 

A preliminary version of the presented research has been published in \cite{nmr08}. The main differences between the preliminary and the current version are summarized in the Conclusions.\footnote{They are described in technical terms, assuming a familiarity with this paper. Most importantly, the notion of preferred answer set is changed in this paper.}
An extended version of this paper with proofs is accessible as \cite{tr}.

\section{Preliminaries} \label{dlp}

The language of extended logic programs is used in this paper.

Let $\sAt$ be a set of atoms. The set of {\em objective literals} is defined as $\mathit{Obj} = {\sAt} \cup
\{ \neg \; A : A \in {\sAt} \}$. If $L$ is an objective literal then the expression of the form $\snot L$
is called {\em default} literal. Notation: $\mathit Def = \{ \snot L \mid L \in \sObj \}$. The set of literals 
$\sLit$ is defined as $\sObj \cup \mathit Def$. 

A convention: $\neg \neg A = A$, where $A \in {\sAt}$.
If $X$ is a set of objective literals, then $\snot X = \{ \snot L \mid L \in X \}$.

A {\em rule} is each expression of the form $L \leftarrow L_{1}, \dots, L_{k}$,
where $k \geq 0$, $L \in \sObj$ and $L_{i} \in \sLit$. If $r$ is a rule of the form as above, then $L$ is denoted by $\mathit{head}(r)$
and $\{ L_{1}, \dots, L_{k} \}$ by $\mathit{body}(r)$.
If $R$ is a set of rules, then $\shead(R) = \{ \shead(r) \mid r \in R \}$ and $\sbody(R) = \{ \sbody(r) \mid r \in R \}$. 
A finite set of rules is called {\em extended logic program} (program hereafter).

The set of {\em conflicting literals} is defined as 
$\sCON = \{ (L_1, L_2) \mid \; L_1 = \snot L_2 \vee L_1 = \neg L_2 \}$.
A set of literals $S$ is {\em consistent} if $(S \times S) \cap \sCON = \emptyset$.
An {\em interpretation} is a consistent set of literals. 
A {\em total} interpretation is an interpretation $I$
such that for each objective literal $L$ either $L \in I$ or $\mathit{not} \; L \in I$. 
A literal $L$ is {\em satisfied} in an interpretation $I$ iff $L \in I$ (notation: $I \models L$). A set of literals $S$ is satisfied in $I$ iff
$S \subseteq I$ (notation: $I \models S$). A rule $r$ is satisfied in $I$ iff $I \models \shead(r)$ whenever $I \models \sbody(r)$.

If $S$ is a set of literals, then we denote $S \cap \sObj$ by $S^{+}$ and $S \cap \sDef$ by $S^{-}$. 
Symbols $\sbodym{r}$ and $\sbodyp{r}$ are used here in that sense (notice that the usual meaning of $\sbody^{-}(r)$
is different). If $X \subseteq \sDef$ then $\sP(X) = \{ L \in \sObj \mid \snot L \in X \}$. Hence, $\snot \sP(\sbodym{r}) = \sbodym{r}$.
If $r$ is a rule, then the rule $\shead(r) \la \sbodyp{r}$ is denoted by $r^{+}$. 

An answer set of a program can be defined as follows (only consistent answer sets are defined).

\label{stable}
A total interpretation $S$ is an {\em answer set} of a program $P$ iff 
$S^{+}$ is the least model\footnote{$P^{+}$ is treated as definite logic program, i.e., each objective literal of the form $\neg A$, where $A \in \sAt$, is considered as a new atom.} of the program $P^{+} = \{ r^{+} \mid S \models \sbodym{r} \}$.
Note that an answer set $S$ is usually represented by $S^{+}$ (this convention is sometimes used also in this paper). 

The set of all answer sets of a program $P$ is denoted by $\sSM(P)$. 
A program is called {\em coherent} iff it has an answer set.

Strict partial order is a binary relation, which is irreflexive, transitive and, consequently, asymmetric.

A {\em prioritized logic program} is usually defined as a triple $(P, \prec, \mathcal N)$, where $P$ is a program, $\prec$ a strict partial order on rules of $P$ and a function $\mathcal N$ assigns names to rules of $P$.
If $r_1 \prec r_2$ it is said that $r_2$ is more preferred than $r_1$.

We will not use $\mathcal N$. If a symbol $r$ is used in this paper in order to denote a rule, then $r$ is considered as the name of that rule (no different name $\mathcal N(r)$ is introduced).

\section{Argumentation Structures}
\label{sas}

Our aim is to transfer a preference relation defined on rules to a preference relation on answer sets and, finally, to a notion of preferred answer sets. To that end argumentation structures are introduced. The basic argumentation structures correspond to rules. Some more general types of argumentation structures are derived from the basic argumentation structures. A special type of argumentation structures corresponds to answer sets.

\begin{df}[$\dep_P$, \cite{js}]
\label{depP}
An objective literal $L$ \emph{depends} on a set of default literals $W \subseteq \sDef$ {\em with respect to a program} $P$ 
($L \dep_P W$\footnote{$L \dep_P W$ could be defined as $T_P^\omega(W)$ and $Cn_{\dep_P}(W)$ as $T_P^\omega(W)$}) iff 
there is a sequence of rules $\langle r_1, \dots, r_k \rangle$, $k \geq 1$, $r_i \in P$ such that
\begin{itemize}
\item $\shead(r_k) = L$,
\item $W \models \sbody(r_1)$,
\item for each $i$, $1 \leq i < k$, $W \cup \{\shead(r_1), \dots, \shead(r_i) \} \models \sbody(r_{i+1})$.
\end{itemize}

The set $\{L \in Lit \mid L \dep_P W \} \cup W$ is denoted by $\sCn{P}{W}$.

$W \subseteq \sDef$ is {\em self-consistent} w.r.t. a program $P$ iff $\sCn{P}{W}$ is consistent.
$\Box$
\end{df}

If $Z \subseteq \sObj$, we will use sometimes the notation $\sCn{P \cup Z}{W}$, assuming that the program $P$ is extended by the set of facts $Z$.

\begin{df}[Dependency structure]
\label{argument}
Let $P$ be a program.

A self-consistent set $X \subseteq \sDef$ is called an {\em argument} w.r.t. the program $P$ for a consistent set of objective literals $Y$, given a set of objective literals $Z$ iff
\begin{enumerate}
\item $\sP(X) \cap Z = \emptyset$,
\item $Y \subseteq \sCn{P \cup Z}{X}$.
\end{enumerate}
We will use the notation $\langle Y \hookleftarrow X; Z \rangle$ and the triple denoted by it is called a {\em dependency structure} (w.r.t. $P$).
$\Box$
\end{df}


If $Z = \emptyset$ also a shortened notation $\langle Y \hookleftarrow X \rangle$ can be used. We will omit sometimes the phrase ``w.r.t. $P$'' and speak simply about dependency structures and arguments, if the corresponding program is clear from the context.

Basic argumentation structures comply with Definition \ref{argument} of dependency structures, if some conditions are satisfied:

\begin{df}[Basic argumentation structure]
\label{basicAS}

Let $r \in P$ be a rule such that
\begin{itemize}
\item $\sbodym{r}$ is self-consistent and
\item $\sP(\sbodym{r}) \cap \sbodyp{r} = \emptyset$.
\end{itemize}
Then $\argStName{A} = \argStFull{ \{ \shead(r) \} }{\sbodym{r}}{\sbodyp{r}}$ is called a {\em basic} argumentation structure.
$\Box$
\end{df}
\begin{pr}
Each basic argumentation structure is a dependency structure. 
$\Box$
\end{pr}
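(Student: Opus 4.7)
The plan is to unpack Definition~\ref{argument} and verify each of its four requirements for the candidate triple $\argStFull{ \{ \shead(r) \} }{\sbodym{r}}{\sbodyp{r}}$, using the two hypotheses of Definition~\ref{basicAS} together with one small derivation.

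Concretely, I would first note that $X := \sbodym{r}$ is a subset of $\sDef$ and is self-consistent by assumption, and that $\sP(X) \cap Z = \sP(\sbodym{r}) \cap \sbodyp{r} = \emptyset$ is the second assumption. Next, $Y := \{ \shead(r) \}$ is a singleton of objective literals (since heads of rules lie in $\sObj$), and inspection of $\sCON$ shows no pair $(L,L)$ with $L \in \sObj$ belongs to $\sCON$, so $Y$ is automatically a consistent set of objective literals.

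The only nontrivial step is condition~2 of Definition~\ref{argument}, namely $\shead(r) \in \sCn{P \cup \sbodyp{r}}{\sbodym{r}}$. For this I would exhibit an explicit derivation witnessing $\shead(r) \dep_{P \cup \sbodyp{r}} \sbodym{r}$. Writing $\sbodyp{r} = \{L_1, \dots, L_m\}$, the program $P \cup \sbodyp{r}$ contains each fact rule $L_i \leftarrow$, so I take the sequence $\langle L_1 \leftarrow, \dots, L_m \leftarrow, r \rangle$. The bodies of the first $m$ rules are empty and hence trivially satisfied by $\sbodym{r}$; for the last rule, $\sbody(r) = \sbodym{r} \cup \sbodyp{r}$ is satisfied by $\sbodym{r} \cup \{L_1, \dots, L_m\}$, which is exactly $W \cup \{\shead(r_1),\dots,\shead(r_m)\}$ along the sequence. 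In the boundary case $\sbodyp{r} = \emptyset$ the sequence collapses to $\langle r \rangle$ and the verification is immediate from $\sbodym{r} \models \sbody(r)$.

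Combining these observations yields all four clauses of Definition~\ref{argument}, so $\argStName{A}$ is a dependency structure. The only place where one might slip is in handling the auxiliary facts $\sbodyp{r}$ correctly: one has to remember that $\sCn{P \cup Z}{W}$ is defined by adjoining $Z$ to the program as facts, which is precisely what makes the sequence above admissible; no use of the preference relation or of self-consistency beyond the hypotheses is needed.
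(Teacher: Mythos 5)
Your proof is correct, and it is exactly the direct verification that the paper leaves implicit (the proposition is stated without proof, evidently regarded as immediate from Definitions~\ref{argument} and~\ref{basicAS}). The only substantive step is indeed condition~2, and your witnessing sequence $\langle L_1 \leftarrow, \dots, L_m \leftarrow, r \rangle$ for $\shead(r) \in \sCn{P \cup \sbodyp{r}}{\sbodym{r}}$, including the degenerate case $\sbodyp{r} = \emptyset$, handles it correctly under the paper's convention that $\sCn{P \cup Z}{W}$ adjoins $Z$ to $P$ as facts.
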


We emphasize that only {\em self-consistent} arguments for {\em consistent} sets of objective literals are considered in this paper. Hence, programs as $P = \{ p \la \snot p \}$ or $Q = \{ p \la \snot q; \neg p \la \snot q \}$ are irrelevant for our constructions.

Some dependency structures can be derived from the basic argumentation structures.
Only the dependency structures derived from the basic argumentation structures using derivation rules from Definition \ref{derR} are of interest in the rest of this paper, we will use the term {\em argumentation structure} for dependency structures derived from basic argumentation structures using derivation rules.

Derivation rules are motivated in Example \ref{zaciatok}. 

\begin{df}[Derivation rules and argumentation structures]
\label{derR}
Each basic argumentation structure is an argumentation structure.
Let $P$ be a program.
\begin{description}
\item[R1]
Let $r_1, r_2 \in P$, $\argStName{A_1} = \argStFull{\sset{\shead(r_1)}}{X_1}{Z_1}$ and $\argStName{A_2} = \argStFull{\sset{\shead(r_2)}}{\sbodym{r_2}}{\sbodyp{r_2}}$ be argumentation structures, $\shead(r_2) \in Z_1$, $X_1 \cup \sbodym{r_2} \cup Z_1 \cup \sbodyp{r_2} \cup \{ \shead(r_1) \}$ be consistent and $X_1 \cup \sbodym{r_2}$ be self-consistent.

Then also $\argStName{A_3} = \argStFull{\shead(r_1)}{X_1 \cup \sbodym{r_2}}{(Z_1 \setminus \{ \shead(r_2) \}) \cup \sbodyp{r_2}}$ is an argumentation structure. We also write $\argStName{A_3} = u(\argStName{A_1}, \argStName{A_2})$. We define $u$ as a symmetric relation: $u(\argStName{A_1}, \argStName{A_2}) = u(\argStName{A_2}, \argStName{A_1})$\footnote{Symmetry of $u$ enables below a more succinct formulation of derivation rules Q1, Q2. The symbol $u$ indicates that $\argStName{A_3}$ is a result of an unfolding.}

\item[R2]
Let $\argStA 1$ and $\argStA 2$ be argumentation structures and $X_1 \cup X_2 \cup Y_1 \cup Y_2$ be consistent and $X_1 \cup X_2$ be self-consistent. 

Then also $\argStName{A_3} = \argStSimple{Y_1 \cup Y_2}{X_1 \cup X_2}$ is an argumentation structure. We also write $\argStName{A_3} = \argStName{A_1} \cup \argStName{A_2}$.

\item[R3] Let $\argStA 1$ be an argumentation structure and $W \subseteq \sDef$. 

If $X_1 \cup W \cup Y_1$ is consistent and $X_1 \cup W$ is self-consistent, then also $\argStName{A_2} = \argStSimple{Y_1}{X_1 \cup W}$ is an argumentation structure. We also write $\argStName{A_2} = \argStName{A_1} \cup W$.$\Box$
\end{description}
\end{df}

\begin{ex}[\cite{be}]
\label{zaciatok}
Let a program $P$ be given ($P$ is used as a running example in this paper): \\
\hspace*{0.3cm} $r_1$ \hspace{2cm} $b \la a, \snot \neg b$ \\
\hspace*{0.3cm} $r_2$ \hspace{2cm} $\neg b \la \snot b$ \\
\hspace*{0.3cm} $r_3$ \hspace{2cm} $a \la \snot \neg a$.

\noindent
Suppose that $\prec = \{ (r_2, r_1) \}$.

Consider the rule $r_2$. The default negation $\snot b$ plays the role of  a {\em defeasible argument}. If the argument can be consistently evaluated as true with respect to a program containing $r_2$, then also $\neg b$ can (and must) be evaluated as true.

As regards the rule $r_1$, default negation $\snot \neg b$ can be treated as an argument for $b$, if $a$ is true, it is an example of a ``conditional argument''.

The following basic argumentation structures correspond to the rules of $P$:
\\
$\langle \{ b \} \hookleftarrow \{ \snot \neg b \} ; \{ a \} \rangle$,$\langle \{ \neg b \} \hookleftarrow  \{ \snot b \} \rangle$, 
$\langle \{ a \} \hookleftarrow \{ \snot \neg a \} \rangle$.
Let us denote them by $\mathcal A_1, \mathcal A_2, \mathcal A_3$, respectively. 

Some arguments can be treated as counterarguments against other arguments. If  we accept the argument $\snot b$ (with the consequence $\neg b$), it can be treated as a counterargument to $\snot \neg b$ and, similarly, $\snot \neg b$ (with the consequence $b$, if $a$ is true) as a counterargument against $\snot b$.  
On the level of argumentation structures it can be said that $\mathcal{A}_1$ contradicts $\mathcal{A}_2$ and vice versa.

The preference relation can be directly transferred to {\em basic} argumentation structures, hence $\mathcal{A}_1$ is more preferred than $\mathcal{A}_2$. Consequently, only the attack of $\mathcal{A}_1$ against $\mathcal{A}_2$ is relevant.

An example of a derived argumentation structure: $\mathcal{A}_3$ enables to ``unfold'' the condition $a$ in $\mathcal{A}_1$, the resulting argumentation structure can be expressed as $\mathcal{A}_4 = \langle \{ b \} \hookleftarrow \{ \snot \neg b, \snot \neg a \} \rangle$. Similarly, $\mathcal{A}_5 = \langle \{ a, b \} \hookleftarrow \{ \snot \neg b, \snot \neg a \} \rangle$ can be derived from $\mathcal{A}_3$ and $\mathcal{A}_4$, $\mathcal{A}_5 = \mathcal{A}_3 \cup \mathcal{A}_4$.

We will also transfer the attack relation from the basic argumentation structures to the derived argumentation structures. 

Observe that some argumentation structures correspond to answer sets. $\mathcal{A}_5$ corresponds to the answer set $\{ a, b \}$ and $\mathcal{A}_6 = \langle \{ a, \neg b \} \hookleftarrow \{ \snot b, \snot \neg a \} \rangle$
to $\{ a, \neg b \}$. Notice that $\mathcal{A}_6 = \mathcal{A}_2 \cup \mathcal{A}_3$. The attack relation enables to select the preferred answer set. This will be discussed later in Example \ref{preverit}.
$\Box$
\end{ex}

\begin{pr}
Each argumentation structure is a dependency structure.
\end{pr}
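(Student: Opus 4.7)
\medskip

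\noindent\textbf{Proof plan.} The argument is by induction on the number of applications of the derivation rules R1, R2, R3 used to build the argumentation structure. The base case is immediate: a basic argumentation structure is already known to be a dependency structure by the previous proposition. For the inductive step, I assume $\mathcal A_1$ (and $\mathcal A_2$, where relevant) satisfies the two conditions of Definition \ref{argument} together with self-consistency, and then verify that the structure produced by each derivation rule also satisfies them.

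For R2 and R3 the verification is short. Self-consistency of the new $X$-component is explicitly assumed in the premise of each rule. The condition $\sP(X) \cap Z = \emptyset$ is trivial because $Z = \emptyset$ in both rules. For the inclusion $Y \subseteq \sCn{P}{X}$, I rely on monotonicity of $\sCn{P}{\cdot}$ in its argument: from $Y_1 \subseteq \sCn{P}{X_1}$ and $Y_2 \subseteq \sCn{P}{X_2}$ (and $X_1,X_2 \subseteq X_1 \cup X_2$) one gets $Y_1 \cup Y_2 \subseteq \sCn{P}{X_1 \cup X_2}$ for R2, and similarly $Y_1 \subseteq \sCn{P}{X_1 \cup W}$ for R3. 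Monotonicity itself is a direct consequence of Definition \ref{depP}, since any witnessing sequence of rules for $W$ remains a witnessing sequence for any superset of $W$.

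The main work is rule R1, which is also where I expect the only real obstacle. Let $\mathcal A_3 = \argStFull{\{\shead(r_1)\}}{X_1 \cup \sbodym{r_2}}{(Z_1 \setminus \{\shead(r_2)\}) \cup \sbodyp{r_2}}$. Self-consistency of $X_1 \cup \sbodym{r_2}$ is given. For the disjointness $\sP(X_1 \cup \sbodym{r_2}) \cap ((Z_1 \setminus \{\shead(r_2)\}) \cup \sbodyp{r_2}) = \emptyset$, I split into four cross-intersections: $\sP(X_1) \cap (Z_1 \setminus \{\shead(r_2)\})$ and $\sP(\sbodym{r_2}) \cap \sbodyp{r_2}$ are empty by the induction hypothesis and the definition of a basic argumentation structure, respectively; the remaining two cross-intersections $\sP(X_1) \cap \sbodyp{r_2}$ and $\sP(\sbodym{r_2}) \cap (Z_1 \setminus \{\shead(r_2)\})$ are empty because a nonempty intersection would place some $L$ and $\snot L$ simultaneously into $X_1 \cup \sbodym{r_2} \cup Z_1 \cup \sbodyp{r_2} \cup \{\shead(r_1)\}$, contradicting the consistency of that union, which is assumed in the premise of R1.

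The key step is the derivability condition $\shead(r_1) \in \sCn{P \cup (Z_1 \setminus \{\shead(r_2)\}) \cup \sbodyp{r_2}}{X_1 \cup \sbodym{r_2}}$. By the induction hypothesis I have a witnessing sequence of rules $\langle \rho_1,\dots,\rho_k\rangle$ for $\shead(r_1) \dep_{P \cup Z_1} X_1$, in which the fact $\shead(r_2)$ from $Z_1$ may have been used to satisfy some body literal. I have a separate witnessing sequence for $\shead(r_2) \dep_{P \cup \sbodyp{r_2}} \sbodym{r_2}$ obtained from $\mathcal A_2$ (which, since $\mathcal A_2$ is basic, reduces to the single rule $r_2$ whose positive body sits in $\sbodyp{r_2}$ and whose negative body sits in $\sbodym{r_2}$). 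Concatenating the latter in front of the former yields a sequence that derives $\shead(r_1)$ from $X_1 \cup \sbodym{r_2}$ over the extended program $P \cup (Z_1 \setminus \{\shead(r_2)\}) \cup \sbodyp{r_2}$: the facts $\shead(r_2)$ that previously came from $Z_1$ are now produced by the prepended derivation, and every other fact either remains in $Z_1 \setminus \{\shead(r_2)\}$ or belongs to $\sbodyp{r_2}$. Monotonicity of the "$\models \sbody(\cdot)$" conditions in Definition \ref{depP} guarantees that all body-satisfaction requirements along the concatenated sequence are preserved. This completes the inductive step for R1 and hence the proof.
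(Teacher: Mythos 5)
Your proof is correct and follows essentially the same route as the paper's: an induction on the derivation, with R2 and R3 discharged by monotonicity of $\sCn{P}{\cdot}$ and R1 handled by splicing the witnessing sequence for $\shead(r_2)$ in front of the one for $\shead(r_1)$ so that the fact $\shead(r_2)$ no longer needs to be supplied by $Z_1$. The only cosmetic difference is that the paper derives the disjointness condition in one step from consistency of the whole union rather than by your four-way case split, and it keeps the occurrence of the fact rule $\shead(r_2)\la$ explicit in the sequence before removing it.
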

\begin{proof}
We have to show that an application of R1, R2 and R3 preserves properties of dependency structures.
\begin{description}
\item[R1]
Since $S_1 = X_1 \cup \sbodym{r_2} \cup Z_1 \cup \sbodyp{r_2} \cup \{ \shead(r_1) \}$ is consistent $S_2 = X_1 \cup \sbodym{r_2} \cup ( Z_1 \setminus \{\shead(r_2)\} ) \cup \sbodyp{r_2} \subseteq S_1$ is also consistent. It means $\sP(X_1 \cup \sbodym{r_2}) \cap ( 
(Z_1 \setminus \sset{\shead(r_2)} ) \cup \sbodyp{r_2}) = \emptyset$.

Let $Q = P \cup (Z_1 \setminus \sset{\shead(r_2)}) \cup \sbodyp{r_2}$ and $w = \shead(r_2) \la$.

From $\shead(r_2) \in \sCn{P \cup \sbodyp{r_2}}{\sbodym{r_2}}$ we have sequence of rules $R_2 = \langle q_1, q_2, \dots, q_m \rangle$ where $m >0$ and $q_m = r_2$. 

From $\shead(r_1) \in \sCn{P \cup Z_1}{X_1}$ we have sequence of rules $R_1 = \langle p_1, p_2, \dots, p_n \rangle$ where $n >0$ and $p_n = r_1$. We assume there is at most one occurrence of $w$ in $R_1$. Otherwise we can remove all but leftmost one. Note that since $r_2 \in P$ there is a possibility to satisfy $\sbody(r_1)$ in a different way than using $w$.

If $w \in R_1$ then we have $p_i = w$ for some $1 \leq i < n$. We construct sequence
\\ 
$R_3 = \langle q_1, q_2, \dots, q_m, p_1, p_2, \dots, p_{i-1}, p_{i+1}, \dots, p_n \rangle$. If $w \not \in R_1$ we construct sequence
\\
$R_3 = \langle q_1, q_2, \dots, q_m, p_1, p_2, \dots, p_n \rangle$. In both cases $R_3$ satisfy conditions from definition \ref{depP} for assumption $X_1 \cup \sbodym{r_2}$.

Since rules in R3 are from program $Q$ we have $\shead(r_1) \in \sCn{Q}{X_1 \cup \sbodym{r_2}}$.

\item[R3]
$Z_2 = \emptyset$ hence $\sP(X_1 \cup W) \cap Z_2= \emptyset$. We have $Y_1 \subseteq \sCn{P}{X_1}$. So for every $y \in Y_1$ there is a sequence $R$ of rules that satisfy Definition \ref{depP} for assumption $X_1$. Same sequence satisfy definition \ref{depP} for superset assumption $X_1 \cup W$. Hence $y \in \sCn{P}{X_1 \cup W}$ and $Y_1 \subseteq \sCn{P}{X_1 \cup W}$.

\item[R2]

$Z_3 = \emptyset$ hence $\sP(X_1 \cup X_2) \cap Z_3 = \emptyset$. We have $Y_1 \subseteq \sCn{P}{X_1}$ hence $Y_1 \subseteq \sCn{P}{X_1 \cup X_2}$. We also have $Y_2 \subseteq \sCn{P}{X_2}$ hence $Y_2 \subseteq \sCn{P}{X_1 \cup X_2}$. Therefore $Y_1 \cup Y_2 \subseteq \sCn{P}{X_1 \cup X_2}$.
\end{description}
\end{proof}

A {\em derivation} of an argumentation structure $\mathcal A$ (w.r.t. $P$) is a sequence $\langle \sM 1, \sM 2, \dots, \sM k \rangle$ of argumentation structures (w.r.t. $P$) such that $\sM 1$ is a basic argumentation structure, $\mathcal A = \sM k$ and each $\sM i$, $1 < i \leq k$, is either a basic argumentation structure or it is obtained by R1 or R2 or R3 from preceding argumentation structures.

\section{Attacks}
\label{satt}

Our approach to preferred answer sets is based on a solution of conflicts between argumentation structures. We distinguish three steps towards that goal. {\em Contradictions} between argumentation structures represent the elementary step. Rule preference and contradiction between basic argumentation structures are used to form an attack relation. Consider two basic argumentation structures $\sM 1$ and $\sM 2$. If $\sM 1$ contradicts $\sM 2$ and corresponds to a more preferred rule, then it {\em attacks} $\sM 2$. Attacks are propagated to other argumentation structures using derivation rules.  Attacks between argumentation structures depend on how argumentation structures are derived. Hence, we need a more context-independent notion and we define a relation of {\em blocking} between argumentation structures. The complement of blocking (warranting) is used in the definition of preferred argumentation structures.

\begin{df}
Consider argumentation structures $\mathcal A = \langle Y_1 \hookleftarrow X_1; Z_1 \rangle$ and $\mathcal B = \langle Y_2 \hookleftarrow X_2; Z_2 \rangle$.

If there is a literal $L \in Y_1$ such that $\snot L \in X_2$, it is said that the argument $X_1$ {\em contradicts} the argument $X_2$ and the argumentation structure $\mathcal A$ {\em contradicts} the argumentation structure $\mathcal B$.

It is said that $X_1$ is a {\em counterargument} to $X_2$.
$\Box$
\end{df}

The basic argumentation structures corresponding to the facts of the given program are not contradicted.

Let $r_1 = a \la$ be a fact and $\snot a \in \sbodym{r_2}$.
Then any $W \subseteq \sDef$ s.t. $\sbodym{r_2} \subseteq W$ is not self-consistent and, therefore, it is not an argument.

\begin{ex}
\label{shows}
In Example \ref{zaciatok}, $\mathcal A_1$ contradicts $\mathcal A_2$ and $\mathcal A_2$ contradicts $\mathcal A_1$.

Only some counterarguments are interesting: the rule $r_1$ is more preferred than the rule $r_2$, therefore the counterargument of $\sM 2$ against $\sM 1$ should not be ``effectual''. We are going to introduce a notion of {\em attack} in order to denote ``effectual'' counterarguments.
$\Box$
\end{ex}

Similarly as for the case of argumentation structures, the basic attacks are defined first. A terminological convention: if $\sM 1$ attacks $\sM 2$, it is said that the pair $(\sM 1, \sM 2)$ is an attack.

\begin{df}
Let $r_2 \prec r_1$ and let $\argStC 1$ contradicts $\argStC 2$. 

Then $\sM 1$ {\em attacks} $\sM 2$ and it is said that this attack is {\em basic}. 
$\Box$
\end{df}

Attacks between argumentation structures ``inherited'' (propagated) from basic attacks are defined in terms of derivation rules. The rules of that inheritance are motivated and defined below.

\begin{ex}
Let us continue with Example \ref{zaciatok}.

Consider the basic argumentation structures
$\mathcal A_1 = \langle \{ b \} \hookleftarrow \{ \snot \neg b \} ; \{ a \} \rangle$, 
$\mathcal A_2 = \langle \{ \neg b \} \hookleftarrow  \{ \snot b \} \rangle$, 
$\mathcal A_3 = \langle \{ a \} \hookleftarrow \{ \snot \neg a \} \rangle$
and the derived argumentation structures $\mathcal A_4 = \langle \{ b \} \hookleftarrow \{ \snot \neg b, \snot \neg a \} \rangle$, $\mathcal A_5 = \langle \{ b, a \} \hookleftarrow \{ \snot \neg b, \snot \neg a \} \rangle$, $\mathcal A_6 = \langle \{ \neg b, a \} \hookleftarrow \{ \snot b, \snot \neg a \} \rangle$.

$(\sM 1, \sM 2)$ is the only basic attack (the more preferred $\sM 1$ attacks the less preferred $\sM 2$).

Derivations of the attacks of $(\mathcal A_4, \sM 2)$ and $(\mathcal A_5, \sM 2)$ could be motivated as follows. 
$\sM 4$ is derived from $\sM 1$ and $\sM 3$ using R1, the attack of $\sM 1$ against $\sM 2$ should be propagated to the attack $(\sM 4, \sM 2)$. Note that $\sM 3$ is not attacked.

Now, $\sM 5$ is derived from $\sM 3$ and $\sM 4$. Again, the attack of $\sM 4$ against $\sM 2$ should be inherited by $(\sM 5, \sM 2)$.

Similarly, $\mathcal{A}_6$ is derived from attacked $\mathcal{A}_2$. The attacks against $\mathcal{A}_2$ are transferred to the attacks against $\mathcal{A}_6$. The attack $(\sM 5, \sM 6)$ is a crucial one, a selection of preferred answer set is based on it; compare with Example \ref{preverit}.

On the contrary, $\sM 2$ contradicts $\sM 4$ and $\sM 5$, but it is based on a less preferred rule, hence those contradictions are not considered as attacks.
$\Box$
\end{ex}

First we define two rules, Q1 and Q2, which specify inheritance of attacks ``via unfolding'' - use of the rule R1. Second, two rules Q3 and Q4 derive attacks when the  attacking or attacked side is joined with another argumentation structure - use of the rule R2. Finally, rules Q5 and Q6 derive attacks, if attacking or attacked side is extended by an assumption - use of the rule R3. Some asymmetries between pairs Q1, Q2 and Q3, Q4 will be discussed below, see Example \ref{troubles}.

\begin{df}[Attack derivation rules]
Basic attacks are attacks.
\begin{description}

\item[Q1]
Let $\argStName{A_1}, \argStName{A_2},\argStName{A_3}$ be argumentation structures such that:
\begin{itemize}
\item $\argStName{A_1}$ attacks $\argStName{A_2}$,
\item $\argStName{A_3}$ does not attack $\argStName{A_1}$, and
\item $u(\argStName{A_2}, \argStName{A_3})$ is argumentation structure.
\end{itemize}
Then $\argStName{A_1}$ attacks $u(\argStName{A_2}, \argStName{A_3})$.

\item[Q2]
Let $\argStName{A_1}, \argStName{A_2}, \argStName{A_3}$ be argumentation structures such that:
\begin{itemize}
\item $\argStName{A_1}$ attacks $\argStName{A_2}$,
\item $\argStName{A_3}$ is not attacked, and
\item $u(\argStName{A_1}, \argStName{A_3})$ is argumentation structure.
\end{itemize}
Then $u(\argStName{A_1}, \argStName{A_3})$ attacks $\argStName{A_2}$.

\item[Q3]
Let $\argStName{A_1}$  and $\argStName{A_3}$ be argumentation structures of the form $\argStSimple{X}{Y}$ and $\argStName{A_2}$ be an argumentation structure. Suppose that:
\begin{itemize}
\item $\argStName{A_1}$ attacks $\argStName{A_2}$, 
\item $\argStName{A_3}$ is not attacked and
\item $\argStName{A_1} \cup \argStName{A_3}$ is argumentation structure.
\end{itemize}
Then $\argStName{A_1} \cup \argStName{A_3}$ attacks $\argStName{A_2}$.

\item[Q4]
Let $\argStName{A_1}$ be an argumentation structure and $\argStName{A_2}$,  $\argStName{A_3}$ be argumentation structures of the form $\argStSimple{X}{Y}$ such that:
\begin{itemize}
\item $\argStName{A_1}$ attacks $\argStName{A_2}$,
\item $\argStName{A_3}$ does not attack $\argStName{A_1}$, and
\item $\argStName{A_2} \cup \argStName{A_3}$ is argumentation structure.
\end{itemize}
Then $\argStName{A_1}$ attacks $\argStName{A_2} \cup \argStName{A_3}$.

\item[Q5]
Let $\argStName{A_1} = \argStSimple{X}{Y}$ and $\argStName{A_2}$ be argumentation structures. Let $W \subseteq \sDef$. Suppose that:
\begin{itemize}
\item $\argStName{A_1}$ attacks $\argStName{A_2}$, and
\item $\argStName{A_1} \cup W = \argStSimple{X}{Y \cup W}$ is argumentation structure.
\end{itemize}
Then $\argStName{A_1} \cup W$ attacks $\argStName{A_2}$.

\item[Q6]
Let $\argStName{A_1}$ and $\argStName{A_2} = \argStSimple{X}{Y}$ be argumentation structures. Let $W \subseteq \sDef$. Suppose that:
\begin{itemize}
\item $\argStName{A_1}$ attacks $\argStName{A_2}$, and
\item $\argStSimple{X}{Y \cup W} = \argStName{A_2} \cup W$ is argumentation structure.
\end{itemize}
Then $\argStName{A_1}$ attacks  $\argStName{A_2} \cup W$.
\end{description}
There are no other attacks except those specified above.
$\Box$
\end{df}

\begin{df}
A {\em derivation of an attack} $\mathcal X$ is a sequence $\mathcal X_1, \dots, \mathcal X_k$, where $\mathcal X = \mathcal X_k$, each $\mathcal X_i$ is an attack (a pair of attacking and attacked argumentation structures), $\mathcal X_1$ is a basic attack and each $\mathcal X_i$, $1 < i \leq k$ is either a basic attack or it is derived from the previous attacks using rules Q1, Q2, Q3, Q4, Q5, Q6.
\end{df}

Derivations of argumentation structures and of attacks blend together. Example \ref{ambiguity} shows that a pair of argumentation structures $(\mathcal B, \mathcal A)$ is an attack w.r.t. a derivation, but it is not an attack w.r.t another derivation. Let us start with a definition.

\begin{df}
Let a program $P$ and an answer set $S$ be given. Let be
$R = \{ r \in P \mid \sbody(r) \subseteq S \}$.
It is said that $R$ is the set of all {\em generating rules} of $S^{+}$.
$\Box$
\end{df} 

\begin{ex}
\label{ambiguity}
Let $P$ be \\
\hspace*{0.3cm} $r_1$ \hspace{2cm} $a \la \snot b$   \\
\hspace*{0.3cm} $r_2$ \hspace{2cm} $b \la \snot a$  \\
\hspace*{0.3cm} $r_3$ \hspace{2cm} $a \la \snot c$ \\
\hspace*{0.3cm} $r_4$ \hspace{2cm} $c \la  b$. 

\noindent
$\prec = \{ (r_1, r_2) \}$. 

There are two answer sets of $P$: $S_1 = \{ a \}$ and $S_2 = \{ b, c \}$. The corresponding argumentation structures are $\mathcal{A} = \langle \{ a \} \hookleftarrow \{ \snot b, \snot c \} \rangle$ and  $\mathcal{B} = \langle \{ b, c \} \hookleftarrow \{ \snot a \} \rangle$, respectively. 

There are two derivations of $\mathcal{A}$. Both derivations start from a basic argumentation structure and R3 is used. The first is the sequence $\mathcal{A}_1, \mathcal A$ and the second is $\mathcal{A}_3, \mathcal A$, where $\mathcal{A}_1 = \langle \{ a \} \hookleftarrow \{ \snot b \} \rangle$ and $\mathcal{A}_3 = \langle \{ a \} \hookleftarrow \{ \snot c \} \rangle$. 

If the sequence $\mathcal{A}_1, \mathcal A$ is considered, 
an attack against $\mathcal A$ is derivable. 
Let be $\mathcal{A}_2 = \langle \{ b \} \hookleftarrow \{ \snot a \} \rangle$, $\mathcal{A}_4 = \langle \{ c \} \hookleftarrow \emptyset; \{ b \} \} \rangle$. The corresponding attack derivation is as
follows: \\ 
$(\mathcal{A}_2, \mathcal{A}_1), (u(\mathcal{A}_4, \mathcal{A}_2), \mathcal{A}_1), (\mathcal{B}, \mathcal{A}_1), (\mathcal{B}, \mathcal{A})$, where Q2, Q3 and Q6 are used. 

The only basic attack of our example is $(\mathcal{A}_2, \mathcal{A}_1)$. 
Hence, the second derivation $\mathcal{A}_3, \mathcal A$ of $\mathcal{A}$ cannot be attacked.

The derivations of $\mathcal{A}$ correspond to two sets of rules generating $S_1$, i.e., $R_1 = \{ r_1 \}$, and $R_2 = \{ r_3 \}$.
$R_1$ contains an attacked rule, while $R_2$ does not contain such a rule. We accept that if there is a set of rules generating an answer set $S$ s.t. no rule is attacked by a rule generating another answer set, then we can consider $S$ as a preferred one.

We transfer this rather credulous approach to derivations of preferred argumentation structures.
$\Box$
\end{ex}

\begin{df}[Complete arguments]

An argumentation structure $\langle Y\hookleftarrow X \rangle$ is called {\em complete} iff for each literal $L \in \sObj$ it holds that $L  \in Y$ or $\snot L \in X$.
$\Box$
\end{df}

\begin{df}[Warranted and blocked derivations]
\label{warBl}
Let $\sigma = \mathcal{A}_1, \dots, \mathcal{A}_k$ be a derivation of an argumentation structure $\mathcal{A}$, where $\mathcal{A} = \mathcal{A}_k$.

It is said that $\sigma$ is {\em blocked} iff 
there is a derivation $\tau$ of the attack $(\mathcal B, \mathcal{A})$, where $\mathcal B$ is a complete argumentation structure and each member of $\tau$ contains an $\mathcal{A}_i$ as a second component. 

A derivation is {\em warranted} if it is not blocked.
$\Box$
\end{df}

\begin{df}[Warranted and blocked argumentation structures]
An argumentation structure $\mathcal A$ is warranted iff there is a warranted derivation of $\mathcal A$.

$\mathcal A$ is blocked iff each derivation of $\mathcal A$ is blocked.
$\Box$
\end{df}

\section{Preferred answer sets}

\begin{ex}
\label{preverit}
Consider our running example, where we have complete argumentation structures
$\mathcal A_5 = \langle \{ b, a \} \hookleftarrow \{ \snot \neg b, \snot \neg a \} \rangle, \mathcal A_6 = \langle \{ \neg b, a \}\hookleftarrow \{ \snot \neg a, \snot b \} \rangle$.

We will prefer $\mathcal A_5$ over $\mathcal A_6$. $\mathcal A_6$ is blocked by $\mathcal{A}_5$. On the other hand, $\mathcal A_5$ is not blocked.  

Consequently, we will consider $\{ a, b \}$ as a preferred answer set of the given prioritized logic program. 
$\Box$
\end{ex}

\begin{df}[Preferred answer set]
An argumentation structure is {\em preferred} iff it is complete and warranted.

$Y \cup X$ is a {\em preferred answer set} iff $ \langle Y \hookleftarrow X \rangle$ is a preferred argumentation structure.
$\Box$
\end{df}

Notice that our notion of preferred answer set is rather a credulous one, it is based on the notion of warranted derivation, i.e., at least one derivation of a preferred answer set should not be blocked.

The following example shows that the argumentation structure corresponding to the only answer set of a program is preferred, even if it is attacked (by an argumentation structure which is not complete).

\begin{ex}
\begin{eqnarray*}
r_1 \hspace{2cm} b & \la & \snot a \\
r_2 \hspace{2cm} a & \la & \snot b \\
r_3 \hspace{2cm} c & \la & a \\
r_4 \hspace{2cm} c & \la & \snot c
\end{eqnarray*}
$\prec = \{ (r_1, r_2), (r_3, r_4) \}$.

Let the basic argumentation structures be denoted by $\sM i$, $i =1, \dots, 4$. $(\sM 1, \sM 2), (\sM 3, \sM 4)$ are the basic attacks. $\sM 1$ attacks $\sM 5 = \langle \{ c \} \hookleftarrow \{ \snot b \} \rangle$ according to the rule Q1 and  $\sM 1$ attacks $\sM 6 = \langle \{ c, a \} \hookleftarrow \{ \snot b \} \rangle$ according to the rule Q4.

Remind that according to Definition \ref{warBl} a derivation can be blocked only by a complete argumentation structure and an argumentation structure is blocked iff each its derivation is blocked. Consequently, the complete argumentation structure $\sM 6$ is not blocked by another complete argumentation structure (there is no such structure) and, consequently, it is the preferred argumentation structure.

We distinguish between attacking and blocking. A blocked argumentation structure is attacked by a {\em complete} argumentation structure. Preferred argumentation structures are not blocked.
$\Box$
\end{ex}

Next example explains asymmetries between Q1, Q2 and Q3, Q4. The main idea is as follows. We are more cautious when an attacking argumentation structure is derived (Q2, Q3) and we require that a ``parent'' of the attacking argumentation structure is not attacked at all. On the other hand, a scheme of derivation rules Q1 and Q4 is as follows: $\mathcal{A}_1$ attacks $\mathcal{A}_2$, $\mathcal{A}$ is a derived argumentation structure from the attacked $\mathcal{A}_2$ and an argumentation structure $\mathcal{A}_3$. In order to derive an attack of $\mathcal{A}_1$ against $\mathcal{A}$ it is sufficient to assume that $\mathcal{A}_3$ does not attack $\mathcal{A}_1$. However, there are some problems with this design decision.

\begin{ex}
\label{troubles}
Consider a program $P$:\\
\hspace*{0.5cm} $r_1$ \hspace{1cm} $a_1 \la \snot a_3, \snot d_2$ \\
\hspace*{0.5cm} $r_2$ \hspace{1cm} $d_1 \la \snot a_3, \snot d_2$ \\
\hspace*{0.5cm} $r_3$ \hspace{1cm} $a_2 \la \snot a_1, \snot d_3$ \\
\hspace*{0.5cm} $r_4$ \hspace{1cm} $d_2 \la \snot a_1, \snot d_3$ \\
\hspace*{0.5cm} $r_5$ \hspace{1cm} $a_3 \la \snot a_2, \snot d_1$ \\
\hspace*{0.5cm} $r_6$ \hspace{1cm} $d_3 \la \snot a_2, \snot d_1$

\noindent
$\prec = \{ (r1, r4), (r3, r5), (r6, r2) \}$.

Notice that a complete argumentation structure $\mathcal{B}_1 = \argStSimple{\{a_1, d_1\}}{\{\snot a_3, \snot d_2\}}$ is derived from $\mathcal{A}_1$ corresponding to $r_1$ and from $\mathcal{A}_2$ corresponding to $r_2$, similarly $\mathcal{B}_2 = \argStSimple{\{a_2, d_2\}}{\{\snot a_1, \snot d_3\}}$ is derived from $\mathcal{A}_3$ corresponding to $r_3$ and from $\mathcal{A}_4$ corresponding to $r_4$ and $\mathcal{B}_3 = \argStSimple{\{a_3, d_3\}}{\{\snot a_2, \snot d_1\}}$ is derived from $\mathcal{A}_5$ corresponding to $r_5$ and from $\mathcal{A}_6$ corresponding to $r_6$. $\mathcal{B}_1, \mathcal{B}_2, \mathcal{B}_3$ are all complete argumentation structures of our example.

Suppose that Q3 does not contain condition that a ``parent'' of the attacking argumentation structure is not attacked at all. Then we get that each complete argumentation structure is blocked, consequently, no preferred answer set can be selected. But we are extremely interested in a selection of a preferred answer set.

As a consequence, we are too liberal in selecting preferred answer sets: Consider program \\
\hspace*{0.5cm} $r_1$ \hspace{1cm} $a \la $ \\
\hspace*{0.5cm} $r_2$ \hspace{1cm} $b \la \snot a$ \\
\hspace*{0.5cm} $r_3$ \hspace{1cm} $c \la \snot b$ \\
\hspace*{0.5cm} $r_4$ \hspace{1cm} $b \la \snot c$ 

\noindent
$\prec= \{ (r2, r3), (r3, r4) \}$.

We get that both $S_1 = \{ a, c \}$ and $S_2 = \{ a, b \}$ are preferred answer sets,
but $S_2$ is not an intuitive selection. The reason is that both argumentation structures (let us denote them by $\mathcal{A}_1$ and $\mathcal{A}_2$) corresponding to $S_1$ and $S_2$, respectively, are attacked and rule Q3 cannot be applied. Hence, each derivation of $\mathcal{A}_1$ and $\mathcal{A}_2$ is warranted. A weaker version of Q3 would enable to repair this, however, it is a too high price for us.
$\Box$
\end{ex}

\begin{te}
If $S$ is a preferred answer set of $(P, \prec, \mathcal N)$, then $S$ is an answer set of $P$.
\end{te}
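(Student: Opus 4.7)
\medskip
\noindent\textbf{Proof plan.} The plan is to show that only \emph{completeness} of the argumentation structure $\langle Y \hookleftarrow X \rangle$ is needed to conclude that $S = Y \cup X$ is an answer set of $P$; warrant (and hence the preference relation $\prec$) plays no role. By the earlier proposition that every argumentation structure is a dependency structure, I may take for granted that $Y \subseteq \sCn{P}{X}$ and that $\sCn{P}{X}$ is consistent.

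First I would check that $S$ is a total consistent interpretation in the sense of Section \ref{dlp}. Totality is immediate from completeness: for every $L \in \sObj$, either $L \in Y \subseteq S$ or $\snot L \in X \subseteq S$. Consistency follows from $X \cup Y \subseteq \sCn{P}{X}$ together with consistency of $\sCn{P}{X}$; in particular $\sP(X) \cap Y = \emptyset$, since otherwise $L$ and $\snot L$ would both lie in $\sCn{P}{X}$.

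The core step is to verify that $S^{+} = Y$ equals the least model of the reduct $P^{S} = \{\,r^{+} \mid S \models \sbodym{r}\,\} = \{\,r^{+} \mid \sbodym{r} \subseteq X\,\}$. For the \emph{model} direction, fix $r \in P$ with $\sbodym{r} \subseteq X$ and $\sbodyp{r} \subseteq Y$. Concatenating the witnessing $\sCn{P}{X}$-derivations for the elements of $\sbodyp{r} \subseteq Y \subseteq \sCn{P}{X}$ and appending $r$ itself (whose body is now satisfied by $X$ together with the intermediate heads) shows $\shead(r) \in \sCn{P}{X}$; if $\shead(r) \notin Y$, completeness would place $\snot \shead(r) \in X \subseteq \sCn{P}{X}$, violating consistency, so $\shead(r) \in Y$. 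For the \emph{least-model} direction, take any $L \in Y \subseteq \sCn{P}{X}$ with a witnessing sequence $r_{1}, \dots, r_{k}$ from Definition \ref{depP}. Because $X \subseteq \sDef$ while every intermediate head is objective, the satisfaction conditions force $\sbodym{r_{j}} \subseteq X$ and $\sbodyp{r_{j}} \subseteq \{\shead(r_{1}), \dots, \shead(r_{j-1})\}$ for each $j$. Hence every $r_{j}^{+}$ lies in $P^{S}$ and $r_{1}^{+}, \dots, r_{k}^{+}$ is a derivation of $L$ in the definite program $P^{S}$, putting $L$ in its least model.

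The main obstacle I anticipate is the bookkeeping in the model direction: one has to merge several $\sCn{P}{X}$-derivations, one per element of $\sbodyp{r}$, into a single sequence that still satisfies the incremental satisfaction condition of Definition \ref{depP} after $r$ is appended. Everything else is routine unpacking of the definitions, so completeness plus the dependency-structure property really do all the work; warrant is never invoked.
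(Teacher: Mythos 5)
Your argument is correct, and it is worth noting that the paper itself states this theorem without proof (deferring to the extended technical report), so there is nothing in the text to compare against; your write-up fills that gap with what is essentially the only natural argument. The two key observations both check out: (i) completeness plus consistency of $\sCn{P}{X}$ gives that $S = Y \cup X$ is a total consistent interpretation with $S^{+} = Y$ and $S^{-} = X$, and the reduct is exactly $\{\,r^{+} \mid \sbodym{r} \subseteq X\,\}$; (ii) in the least-model direction, since $X \subseteq \sDef$ and all intermediate heads are objective, the incremental satisfaction condition of Definition~\ref{depP} really does force $\sbodym{r_j} \subseteq X$ and $\sbodyp{r_j} \subseteq \{\shead(r_1),\dots,\shead(r_{j-1})\}$, so the witnessing sequence transfers verbatim to a derivation in the definite reduct. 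The bookkeeping you flag in the model direction is harmless: concatenating the witnessing sequences for the elements of $\sbodyp{r}$ only ever enlarges the set $W \cup \{\shead(r_1),\dots,\shead(r_i)\}$ available at each step, so each condition of Definition~\ref{depP} remains satisfied, and appending $r$ is legitimate because $\sbodym{r} \subseteq X$ and $\sbodyp{r}$ is contained in the accumulated heads (the degenerate case $\sbodyp{r} = \emptyset$ is handled by taking $r$ itself as the one-element sequence). Your observation that warrant, attacks, and $\prec$ play no role — only completeness and the dependency-structure property established earlier in the paper are used — is accurate and is a useful point to make explicit.
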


\section{Principles}

The principles (partially) specify what it means that an order on answer sets corresponds to the given order on rules. The first two principles below are dependent on \cite{be}.
Principle III reproduces an idea of Proposition 6.1 from \cite{be}. 
The Principles of \cite{be} are originally expressed in an abstract way for the general case of nonmonotonic prioritized defeasible rules. We restrict the discussion (and the wording) of the Principles
to the case of logic programs and answer sets.

Let $P$ be a program and $r_1, r_2 \in P$. It is said that $r_2$ is {\em attacked} by $r_1$ ($r_1$ attacks $r_2$) iff $\snot \shead(r_1) \in \sbodym{r_2}$.

\begin{df}
\label{atGenR}
Let a prioritized logic program $(P, \prec, \mathcal N)$ be given. Let $R_1, R_2$ be sets of generating rules for some answer sets of $P$. It is said that $R_1$ {\em attacks} $R_2$ iff there is $r_1 \in R_1, r_2 \in R_2$ such that $r_2 \prec r_1$ and $r_1$ attacks $r_2$.
\end{df}

\begin{df}
\label{warGenR}
Let a prioritized logic program $(P, \prec, \mathcal N)$ be given. Let $R$ be a set of generating rules of some answer set of $P$. It is said that $R$ is a {\em warranted set of generating rules} iff there is no set $Q$ of generating rules of some answer set of $P$ such that $Q$ attacks $R$ .
$\Box$
\end{df}

Principle I in its original formulation does not hold for our approach. A terminological remark -- words associated to our approach (attack, warranted) are used in presented formulations of Principles. But remind definitions \ref{atGenR} and \ref{warGenR} -- the notions are defined for generating sets of rules independently on our approach. We have considered the same principles also in another approach, see \cite{sano} and also \cite{dc-iclp}. It is defined directly on generating sets, and uses neither argumentation structures nor derivation rules.

In all principles below it is assumed that a prioritized logic program $(P, \prec, \mathcal N)$ is given.\footnote{The original formulation of principles by \cite{be} is as follows.

Principle I. Let $B_1$ and $B_2$ be two belief sets of a prioritized theory $(T; \prec)$ generated by the (ground) rules $R \cup d_1$ and $R \cup d_2$, where $d_1, d_2 \not \in R$, respectively. If $d_1$ is preferred over $d_2$, then $B_2$ is not                             a (maximally) preferred belief set of $T$.

Principle II. Let $B$ be a preferred belief set of a prioritized theory $(T; \prec)$ and $r$ a (ground) rule such that at least one prerequisite of $r$ is not in $B$. Then $B$ is a preferred belief set of $(T \cup \{ r \}; \prec^{\prime} )$ whenever $\prec^{\prime}$
      agrees with $\prec$ on priorities among rules in $T$.
}

{\bf Principle I} Let $A_1$ and $A_2$ be two answer sets of the program $P$. Let $R \subset P$ be a set of rules and $d_1, d_2 \in P \setminus R$ are rules. Let $A_{1}^{+}, A_{2}^{+}$ be generated by the rules $R \cup \{ d_1 \}$ and $R \cup \{ d_2 \}$, respectively. If $d_1$ is preferred over $d_2$ and each set of generating rules of $\mathcal{A}_2^{+}$ is attacked by a warranted set of generating rules of some answer set of $P$, then $A_2$ is not a preferred answer set of $(P, \prec, \mathcal N)$.
$\Box$

Our formulation of Principle I differs from the original formulation in \cite{be} -- the condition ``each set of generating rules of $\mathcal{A}_2^{+}$ is attacked by a warranted set of generating rules of some answer set of $P$'' is added because of the credulous stance to warranted derivations accepted in this paper

We do not accept the following principle. See discussion below.

{\bf Principle II}
Let $A$ be a preferred answer set of a prioritized logic program $(P, \prec, \mathcal N)$ and $r$ be a rule such that 
$\sbodyp{r} \not \subseteq A^{+}$. 
Then $A$ is a preferred answer set of $(P \cup \{ r \}, \prec^{\prime}, \mathcal N^{\prime})$, whenever $\prec^{\prime}$ agrees with $\prec$ on rules in $P$ and $\mathcal N^{\prime}$ extends $\mathcal N$ with the name $r$.
$\Box$

We believe that the possibility to always select  a preferred answer set from a non-empty set of standard answer sets is of critical importance. 

{\bf Principle III}
Let $\mathcal B \not = \emptyset$ be the set of all answer sets 
of $P$.
Then there is a selection function $\Sigma$ s.t. $\Sigma(\mathcal B)$ is the set of  all preferred answer sets of $(P, \prec, \mathcal N)$, where $\emptyset \not = \Sigma(\mathcal B) \subseteq \mathcal B$.
$\Box$

It is shown in \cite{be}, Proposition 6.1, that Principle II is incompatible with  Principle III, if the notion of preferred answer set from \cite{be} is accepted:

\begin{ex}[\cite{be}]
\label{tamtonieje}
Consider program $P$, whose single standard answer set is $S = \{ b \}$ and the rule (\ref{oone}) is preferred over the rule (\ref{ttwo}).
\begin{eqnarray} 
c & \la & \snot b \label{oone} \\
b & \la & \snot a \label{ttwo}
\end{eqnarray}
$S$ is not a preferred answer set in the framework of \cite{be}.
Assume that $S$, the only standard answer set of $P$, is selected -- according
to the Principle III  -- as the preferred answer set of $(P, \prec, \mathcal N)$.\footnote{Observe that the only derived complete argumentation structure is $\langle \{ b \}\hookleftarrow \{ \snot a, \snot c \} \rangle$. Hence, $\{ b \}$ is a preferred answer set of $(P, \prec,\mathcal N)$ in our framework.}
Let $P^{\prime}$ be $P \cup \{ a \la c \}$ and $a \la c$ be preferred over the both rules \ref{oone} and \ref{ttwo}.
$P^{\prime}$ has two standard answer sets, $S$ and $T = \{ a,c \}$.
Note that $\{ c \} \not \subseteq S^{+}$.  Hence, $S$ should be the preferred answer set of $P^{\prime}$ according to the Principle II. However, in the framework  of \cite{be} the only preferred answer set of $(P^{\prime}, \prec^{\prime}, \mathcal N^{\prime})$ is $T$. This selection of preferred answer set satisfies clear intuitions -- $T$ is generated by the two most preferred rules. 

In our approach the complete argumentation structure $\sM 4 = \langle \{ a, c \} \hookleftarrow \{ \snot b \} \rangle$ blocks the complete argumentation structure $\sM 5 = \langle \{ b \} \hookleftarrow \{ \snot a, \snot c \} \rangle$, hence, $\sM 4$ is preferred and $\{a, c \}$ is the preferred answer set.

Principle III is of crucial value according to our view. 
A more detailed justification of our decision not to accept Principle II is presented in \cite{nmr08}. We mention here only that  
we select preferred answer sets of $P^{\prime}$ from a broader variety of possibilities.
Consequently, no condition satisfied by a subset of those possibilities should constrain the selection of  preferred answer set from the extended set of possibilities.
$\Box$
\end{ex}
Principle II is not accepted also in \cite{si}. According to \cite{class} descriptive approaches do not satisfy this principle in general.




Principle IV expresses when an answer set is a preferred one.
We consider it as an expression of a descriptive approach to preferred answer set specification, as we understand it and accept in this paper.

{\bf Principle IV}
Let $S$ be answer set of $P$. Suppose that $S$ is generated by a set of rules $R$. If $R$ is a warranted set of generating rules then $S$ is a preferred answer set. $\Box$

As regards a choice of principles, we accept the position of \cite{be}: even if somebody does not accept a set of principles for preferential reasoning, those (and similar) principles are still of interest as they may be used for classifying different patterns of reasoning.

\section{Discussion}

The question whether derivation rules for attacks are sufficient and necessary arises in a natural way. Our only response to the question in this paper is that Principles I, III, IV are satisfied, when we use notions of attack, blocking and warranting introduced in this paper
We proceed to theorems about satisfaction of principles. 

\begin{te}
Principle III is satisfied. Let $\mathcal P = (P, \prec, \mathcal N)$ be a prioritized logic program and $\sSM(P) \not = \emptyset$. 
Then there is a preferred answer set of $\mathcal P$.
\end{te}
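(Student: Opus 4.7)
The plan is to produce, given $\sSM(P) \neq \emptyset$, an answer set $S$ of $P$ together with a warranted derivation of its complete argumentation structure $\mathcal{A}_S$, from which $S$ is preferred by definition. The argument bridges the argumentation machinery and the combinatorial framework of Definitions~\ref{atGenR}--\ref{warGenR}: I would first observe that blocking of a derivation of $\mathcal{A}_S$ is detectable at the level of generating rule sets, then build a suitable generating set using the strict partial order $\prec$, and finally convert this into a warranted derivation via R1--R3.

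First I would analyse how a blocking derivation of $\mathcal{A}_S$ must be shaped. Any derivation $\sigma = \sM 1, \dots, \sM k$ of $\mathcal{A}_S$ employs a set $R \subseteq P$ of basic argumentation structures $\{\mathcal{A}_r : r \in R\}$, and by repeatedly unfolding the side conditions of R1--R3 one sees that $R$ is a set of generating rules for $S^+$, exactly as sketched in Example~\ref{ambiguity}. Tracing the rules Q1--Q6, any derivation of an attack $(\mathcal{B}, \mathcal{A}_S)$ every member of which carries some $\sM i$ as second component must originate in a basic attack $(\mathcal{B}_{r'}, \mathcal{A}_r)$ with $r \in R$, $r \prec r'$, and $\snot \shead(r') \in \sbodym{r}$. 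Because $\mathcal{B}$ is complete and is produced from $\mathcal{B}_{r'}$ by R1--R3, the rule $r'$ must occur in some generating set $R'$ of another answer set $T$ of $P$. In short: $\sigma$ is blockable only if some answer set $T$ has a generating set $R'$ that attacks $R$ in the sense of Definition~\ref{atGenR}.

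Second, I would exhibit an answer set $S^\star$ together with a warranted generating set $R^\star$ in the sense of Definition~\ref{warGenR}. The construction processes the rules of $P$ along a linear extension of $\prec$ from most to least preferred, maintaining a partial interpretation and at each stage committing to a rule $r$ whenever this commitment is still extendible to some answer set of $P$ (which is always possible thanks to the coherence hypothesis). The $\prec$-maximality of each commitment is the key: if some $r \in R^\star$ were attacked by $r' \in R'$ for a generating set $R'$ of an answer set, then $r'$ is strictly more preferred than $r$ and so would have been considered before $r$; since $\shead(r') \in R'$ is compatible with a rival answer set, the branch that commits to $r'$ would have been taken in place of $r$, so $\snot \shead(r') \in \sbodym{r}$ would fail by the time $r$ is visited. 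Hence no such $R'$ can attack $R^\star$.

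Finally, for $(S^\star, R^\star)$ I would construct the canonical derivation $\sigma^\star$ of $\mathcal{A}_{S^\star}$ from the basic structures $\{\mathcal{A}_r : r \in R^\star\}$ using R1 to unfold positive body atoms, R2 to assemble partial results, and R3 to pad the assumption side up to $\{\snot L : L \in \sObj \setminus (S^\star)^+\}$; the consistency and self-consistency side conditions of R1--R3 follow from $S^\star$ being a consistent answer set generated by $R^\star$. The analysis of the first paragraph then forces any blocking of $\sigma^\star$ to come with an attacking generating set $R'$, contradicting the warranted choice of $R^\star$, so $\sigma^\star$ is warranted and $S^\star$ is preferred. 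The main obstacle is the greedy construction in the third paragraph: the attack relation on generating rule sets need not be well-founded, and simultaneously guaranteeing termination on a genuine generating set and blocking all later attackers requires a delicate induction along the linear extension of $\prec$, likely recast as a least fixpoint of a monotone operator that selects $\prec$-maximal enabled rules.
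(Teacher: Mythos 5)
Your plan reduces the theorem to producing a \emph{warranted set of generating rules} in the sense of Definition~\ref{warGenR} and then running what is essentially Principle~IV (Theorem~\ref{v}). The fatal problem is in your second step: a warranted set of generating rules need not exist, even though $\prec$ is a strict partial order. Take the six rules of Example~\ref{troubles},
$r_1: a_1 \la \snot a_3, \snot d_2$;\ $r_2: d_1 \la \snot a_3, \snot d_2$;\ $r_3: a_2 \la \snot a_1, \snot d_3$;\ $r_4: d_2 \la \snot a_1, \snot d_3$;\ $r_5: a_3 \la \snot a_2, \snot d_1$;\ $r_6: d_3 \la \snot a_2, \snot d_1$,
with $\prec\,=\{(r_1,r_4),(r_3,r_6),(r_5,r_2)\}$ (three disjoint pairs, hence trivially a strict partial order). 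The answer sets are $\{a_1,d_1\}$, $\{a_2,d_2\}$, $\{a_3,d_3\}$, with unique generating sets $R_1=\{r_1,r_2\}$, $R_2=\{r_3,r_4\}$, $R_3=\{r_5,r_6\}$. Then $R_2$ attacks $R_1$ (since $r_1\prec r_4$ and $\snot d_2\in\sbodym{r_1}$), $R_3$ attacks $R_2$ (since $r_3\prec r_6$ and $\snot d_3\in\sbodym{r_3}$), and $R_1$ attacks $R_3$ (since $r_5\prec r_2$ and $\snot d_1\in\sbodym{r_5}$). Every generating set of every answer set is attacked by a generating set of another answer set, so no $R^\star$ of the kind you want exists, and no greedy or fixpoint refinement along a linear extension of $\prec$ can conjure one: the attack relation on generating sets is genuinely cyclic here even though $\prec$ is acyclic. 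Your closing remark correctly senses that well-foundedness is the issue, but it is not a technical obstacle to be handled by a more delicate induction --- the object your proof needs simply is not there.

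This is exactly why the paper's proof does not go through generating sets at all. It argues directly at the level of attack derivations: assuming every complete argumentation structure is blocked, it reconstructs, from a derivation of the attack $(\argStName{A}_i,\argStName{A}_j)$, a derivation of the attacker $\argStName{A}_i$, observes that this derivation is itself blocked by hypothesis, and concludes that the side conditions of Q1--Q4 (the attacker's parent must be unattacked in Q2/Q3, and the auxiliary structure must not attack the attacker in Q1/Q4) make the attack underivable --- a contradiction. In other words, Principle~III is rescued precisely by the asymmetric hypotheses built into the attack derivation rules, which break cyclic blocking even when the combinatorial attack relation on generating sets is cyclic; this is the point of the discussion surrounding Example~\ref{troubles}. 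Your first and third steps (blocking of a canonical derivation forces a rule-level attack from a generating set of another answer set; an unattacked generating set yields a warranted derivation) are sound and amount to Theorem~\ref{v}, but they give only a sufficient condition for preferredness, and sufficiency is not enough to prove existence here.
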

\begin{proof}
{\bf Case 1} is trivial -- if a program $P$ have only one answer set $S$, then no complete argumentation structure blocks $\argStSimple{S^{+}}{S^{-}}$.

{\bf Case 2}. Let a program $P$ has only two answer sets $S_1$ and $S_2$. Let the corresponding complete argumentation structures be $\argStName{A_1} = \argStSimple{S_1^{+}}{S_1^{-}}$ and $\argStName{A_2} = \argStSimple{S_2^{+}}{S_2^{-}}$, respectively. Suppose that $\argStName{A_1}$ and $\argStName{A_2}$ block each other. 

It means that each derivation of both is blocked by the other complete argumentation structure. Consider all derivations of $\argStName{A_1}$ (which should be blocked by $\argStName{A_2}$). Hence, each derivation $\sigma_i$ contains an argumentation structure $\mathcal{B}_i$ attacked by $\argStName{A_2}$, i.e., $\mathcal{X} = (\argStName{A_2}, \mathcal{B}_i)$ is an attack. Each derivation of $\mathcal{X}$ should start from a basic attack and ends with $(\argStName{A_2}$, $\argStName{B_i})$.

If $\mathcal{X}$ is a basic attack, 
then the only generating set of rules of $S_2$ contains only one rule $r = S_2^{+} \la S_2^{-}$, where $S^{+} = \{ \shead(r) \}$. We assume that there is a rule $r_1$ s.t. $r_1 \prec r$ and $\snot \shead(r) \in \sbodym{r_1}$. On the other hand, $\argStName{A_1}$ blocks $\argStName{A_2}$ and there is an $r_2 \in P$ which is among the generating rules of $S_1$, $r \prec r_2$ and $\snot \shead(r_2) \in \sbodym{r}$. 

Notice that $\argStFull{\shead(r_2)}{\sbodym{r_2}}{\sbodyp{r_2}}$ attacks $\argStName{A_2}$. If $\sbodyp{r_2} \not = \emptyset$, then a derivation of attack $(\argStName{A_2}, \argStName{A_1})$ has to use Q1 and $\argStFull{\shead(r_2)}{\sbodym{r_2}}{\sbodyp{r_2}}$. But Q1 is not applicable -- attacking argumentation structure should be not attacked. 
Similarly, if $\sbodyp{r_2} = \emptyset$, Q4 should be used, but Q4 is not
applicable because of the same reason.

Assume that $\mathcal{X}$ is not a basic attack. Then there is a basic attack as follows. \\ 
Let be $\mathcal{R}_1 = \argStFull{\shead(r_1)}{\sbodym{r_1}}{\sbodyp{r_1}}$, \\
$\mathcal{R}_2 = \argStFull{\shead(r_2)}{\sbodym{r_2}}{\sbodyp{r_2}}$, 
where \\ 
$\shead(r_2) \in S_2, \shead(r_1) \in S_1, r_1 \prec r_2, \snot \shead(r_2) \in \sbodym{r_1}$
and, consequently, $(\mathcal{R}_2, \mathcal{R}_1)$ is a basic attack.  

We will prove that if each derivation of $\mathcal{A}_2$ is blocked by $\mathcal{A}_1$, then it is impossible to derive the attack $(\mathcal{A}_2, \mathcal{A}_1)$.

Let the basic attack $(\mathcal{R}_2, \mathcal{R}_1)$ be given. A derivation of $(\mathcal{A}_2, \mathcal{A}_1)$ from the basic attack should contain rules Q2 or Q3 or Q4 or Q5 in order to proceed from $\mathcal{R}_2$ to $\mathcal{A}_2$ ($\mathcal{X}$ is not a basic attack). A derivation of $\mathcal{A}_2$ using R1, R2, R3 could be reconstructed from this. The derivation is blocked. Therefore, Q2, Q3, Q4 and Q5 are not applicable and the derivation of $(\mathcal{A}_2, \mathcal{A}_1)$ is impossible.

{\bf Case 3}. Let be $\sSM(P) = \{ S_1, \dots, S_k \}, k \geq 3$. Assume that the corresponding complete argumentation structures are $\argStName{A_i}, i = 1, \dots, k$. Suppose that each of them is blocked. Let us denote the set $\{ \argStName{A_i} \mid i = 1, \dots, k \}$ by $O$.

Suppose that the set $N \subseteq O$ contains only blocked, 
but not blocking complete argumentation structures (each $\mathcal{A} \in N$ is blocked and not blocking). If $O \setminus N$ contains only basic argumentation structures then the preference relation $\prec$ is cyclic. Let $M \subseteq O$ be the set of complete argumentation structures which block an argumentation structure and they are not basic argumentation structures.

We will show that there is $\mathcal{A} \in M$ which is not blocked. 

We assumed to the contrary that each complete argumentation structure in $M$ is blocked (and blocking simultaneously). If the cardinality of $M$ is 2, Case 2 applies.

Let $\argStName{A_1}$ be in $M$, i.e., $\argStName{A_1}$ is a not basic argumentation structure. Assume (without loss of generality) that each derivation of $\argStName{A_1}$ is blocked and $\argStName{A_1}$ blocks a derivation of $\argStName{A_3}$.
We have to show that an attack $(\argStName{A_1}, \argStName{A_3})$ is not derivable.

Consider a derivation of the attack $(\argStName{A_1}, \argStName{A_3})$ and reconstruct the corresponding derivation of $\argStName{A_1}$. Suppose that  $\argStName{A_2}$ (again without loss of generality) blocks this derivation of $\argStName{A_1}$. 

Hence, $\argStName{A_2}$ attacks an argumentation structure $\mathcal B$ in the derivation of $\argStName{A_1}$. It follows that some argumentation structure in a derivation of $\argStName{A_2}$ attacks a basic argumentation structure in the derivation of $\argStName{A_1}$.
Consequently, neither rules Q1 and Q4, nor rules Q2 and Q3 are applicable in a derivation of the attack $(\argStName{A_1}, \argStName{A_3})$. Therefore, it is not derivable.
\end{proof}

Let $R \in \mathcal{R}$ be attacked by a warranted set $Q$ of generating rules for some answer set of $P$. Since $Q$ is warranted, there is a warranted derivation of complete argumentation structure $\argStName{B}$ corresponding to $Q$. There is also a derivation of complete argumentation structure $\argStName{A}$ corresponding to $R$. $Q$ attacks $R$, so there is a basic argumentation structure $\argStName{C}$ from the derivation of $\argStName{A}$ attacked by $\argStName{D}$ from derivation of $\argStName{B}$. $Q$ is warranted, rules $Q2$ and $Q3$ are applicable and hence attack $(\argStName{D}, \argStName{C})$ is propagated to attack $(\argStName{B}, \argStName{C})$. It follows that derivation of $\argStName{A}$ is blocked.


\begin{te}
\label{i}
Principle I is satisfied. 
Let $\mathcal P = (P, \prec, \mathcal N)$ be a prioritized logic program, $A_1$ and $A_2$ be two answer sets of $P$. Let $R \subset P$ be a set of rules and $d_1, d_2 \in P \setminus R$ are rules, $d_1$ is preferred over $d_2$. Let $A_{1}^{+}, A_{2}^{+}$ be generated by the rules $R \cup \{ d_1 \}$ and $R \cup \{ d_2 \}$, respectively. Assume that each set of generating rules of $\mathcal{A}_2^{+}$ is attacked by a warranted set of generating rules of some answer set of $P$.

Then $A_2$ is not a preferred answer set of $(P, \prec, \mathcal N)$.
\end{te}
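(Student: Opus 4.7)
The plan is to prove the contrapositive of preferredness, i.e.\ to show that the complete argumentation structure $\argStSimple{A_2^+}{A_2^-}$ corresponding to $A_2$ (call it $\mathcal{F}$) is blocked. By definition this requires every derivation $\sigma$ of $\mathcal{F}$ to be blocked. I would fix such a $\sigma$, extract from it the set $R'$ of rules of $P$ whose basic argumentation structures appear as members, and observe by a routine induction on the length of $\sigma$ (using Definition \ref{derR}) that $R'$ is a set of generating rules of $A_2$. The hypothesis then supplies a warranted generating set $Q$ (for some answer set $A$ of $P$) that attacks $R'$. By Definition \ref{atGenR} there exist $r_q \in Q$ and $r' \in R'$ with $r' \prec r_q$ and $\snot \shead(r_q) \in \sbodym{r'}$, so the basic argumentation structures $\mathcal{R}_q$ and $\mathcal{R}'$ for these rules form a basic attack $(\mathcal{R}_q, \mathcal{R}')$, with $\mathcal{R}'$ occurring as a member of $\sigma$.

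The core construction propagates this basic attack along two derivations in parallel: a warranted derivation $\rho$ of the complete argumentation structure $\mathcal{B} = \argStSimple{A^+}{A^-}$ (which exists because $Q$ is warranted) on the attacker side, and the given $\sigma$ on the attacked side. Stepping through $\rho$ from $\mathcal{R}_q$ up to $\mathcal{B}$, I would use rules Q2, Q3, and Q5 to lift the attacker from $\mathcal{R}_q$ to $\mathcal{B}$; the ``attacker parent is unattacked'' side conditions of Q2 and Q3 are discharged by warrantedness of $\rho$. This yields an attack $(\mathcal{B}, \mathcal{R}')$. Then, stepping through $\sigma$ from $\mathcal{R}'$ up to $\mathcal{F}$, I would use rules Q1, Q4, and Q6 to lift the attacked side; the side conditions in Q1 and Q4 that the adjoined pieces do not attack $\mathcal{B}$ follow again from warrantedness of $Q$. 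Concatenating the two propagations gives a derivation $\tau$ of the attack $(\mathcal{B}, \mathcal{F})$ in which every member has as its second component some member of $\sigma$, and $\mathcal{B}$ is complete; hence by Definition \ref{warBl} the derivation $\sigma$ is blocked. Since $\sigma$ was arbitrary, $\mathcal{F}$ is blocked and $A_2$ is not a preferred answer set.

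The main obstacle is bridging the rule-level notion ``warranted set of generating rules'' (Definition \ref{warGenR}) and the argumentation-level claim that the structures appearing in $\rho$ and the pieces adjoined to $\mathcal{R}'$ along $\sigma$ cannot attack $\mathcal{B}$ --- the fact needed to discharge the side conditions of Q1--Q4 above. This connection is not mechanically obvious, because argumentation-level attacks are generated by R1--R3 and Q1--Q6, not by reference to generating sets. I would close the gap with an auxiliary lemma: whenever an attack $(\mathcal{X}, \mathcal{Y})$ is derivable and $\mathcal{Y}$ is built from basic argumentation structures whose rules lie in $Q$, the rules appearing in $\mathcal{X}$'s derivation extend to a generating set of some answer set of $P$ that attacks $Q$, contradicting the warrantedness of $Q$. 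With that lemma in hand, the remaining bookkeeping along $\rho$ and $\sigma$ is straightforward.
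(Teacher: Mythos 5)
Your strategy coincides with the paper's. The proof body the paper attaches to this theorem is only a restatement of hypothesis and conclusion, but the displaced paragraph immediately preceding the theorem statement carries the intended argument, and it is essentially your construction: take the generating set $R$ associated with a derivation of the complete structure for $A_2$, obtain from the hypothesis a warranted set $Q$ attacking it, locate the basic attack $(\mathcal{D},\mathcal{C})$ with $\mathcal{C}$ in the derivation of $\mathcal{A}$ and $\mathcal{D}$ in the derivation of $\mathcal{B}$, and lift the attacker to $\mathcal{B}$ via Q2/Q3, declared ``applicable because $Q$ is warranted.'' You go beyond the paper in two useful ways: you also lift the attacked side from $\mathcal{R}'$ up to $\mathcal{F}$ via Q1/Q4/Q6, which is needed if Definition~\ref{warBl} is read literally (the derived attack must be $(\mathcal{B},\mathcal{A}_k)$, not merely $(\mathcal{B},\mathcal{A}_i)$ for some member $\mathcal{A}_i$ of $\sigma$), whereas the paper stops at $(\mathcal{B},\mathcal{C})$; and you make explicit the induction showing that the rules used in $\sigma$ form a generating set of $A_2$ --- here you should restrict to the basic members actually consumed in producing $\mathcal{A}_k$, since a derivation may contain idle members whose rules are not generating rules, in which case the hypothesis does not apply to your $R'$.

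The genuine gap is the one you flag yourself, and your auxiliary lemma does not close it. Warrantedness of $Q$ (Definition~\ref{warGenR}) only excludes attacks on $Q$ by \emph{generating sets of answer sets} of $P$, while the side conditions of Q2 and Q3 demand that the parent $\mathcal{A}_3$ be not attacked \emph{at all}: a basic attack on a basic structure for $r_3$ needs only some rule $r$ with $r_3\prec r$ and $\snot \shead(r)\in\sbodym{r_3}$ (together with the existence of both basic structures), and such an $r$ need not belong to any generating set of any answer set. Consequently the conclusion of your proposed lemma --- that the rules in the attacker's derivation extend to a generating set of some answer set that attacks $Q$ --- is not available, and no contradiction with Definition~\ref{warGenR} arises; the same problem affects the ``does not attack $\mathcal{A}_1$'' conditions of Q1 and Q4 on the attacked side. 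The identical unsupported step (``$Q$ is warranted, rules Q2 and Q3 are applicable'') appears in the paper's own argument, so you have reproduced, and at least honestly identified, a hole the paper leaves open rather than introduced a new one.
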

\begin{proof}
It is assumed that each set of generating rules of $A_2$
 is attacked by a warranted set of generating rules of some answer set of P.
$A_2$ is not a preferred answer set of $(P, \prec , \mathcal{N})$.
\end{proof}

\begin{te}
\label{v}
Principle IV is satisfied. Let $\mathcal P = (P, \prec, \mathcal N)$ be a prioritized logic program and $S$ be an answer set of $P$. Suppose that $S$ is generated by a warranted set of rules $R$. 

Then $S$ is a preferred answer set.
\end{te}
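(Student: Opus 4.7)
The plan is to exhibit an explicit warranted derivation $\sigma$ of the complete argumentation structure $\mathcal{A} = \langle S^{+} \hookleftarrow S^{-}\rangle$ built solely from the basic argumentation structures corresponding to rules in $R$, and then use the warrantedness of $R$ to rule out any blocking of $\sigma$.

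First I will construct $\sigma$. For each $r \in R$, $\sbody(r) \subseteq S$ gives $\sbodym{r} \subseteq S^{-}$ and $\sbodyp{r} \subseteq S^{+}$, so the basic argumentation structure $\langle\{\shead(r)\} \hookleftarrow \sbodym{r}; \sbodyp{r}\rangle$ exists. Because $S^{+}$ is the least model of the reduct of $P$ with respect to $S$, every literal of $S^{+}$ is the head of some rule in $R$; hence each positive-body literal in the $Z$-component of a basic structure of $R$ can be eliminated by iterated applications of R1 using other basic structures of $R$. Amalgamating the resulting $Z$-free structures by R2 yields $\langle S^{+} \hookleftarrow X_{0}\rangle$ with $X_{0} = \bigcup_{r \in R}\sbodym{r} \subseteq S^{-}$, and a final application of R3 adjoining $S^{-} \setminus X_{0}$ gives $\mathcal{A}$. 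All side conditions on consistency and self-consistency hold because $\sCn{P}{S^{-}} = S$ is a consistent answer set.

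Next, suppose toward contradiction that $\sigma$ is blocked by a complete argumentation structure $\mathcal{B} = \langle Y \hookleftarrow X\rangle$ via an attack derivation $\tau$ whose members all have their second component in $\sigma$. The basic attack $(\mathcal{R}_{2}, \mathcal{R}_{1})$ at the root of $\tau$ has $\mathcal{R}_{1}$ basic and in $\sigma$, so $\mathcal{R}_{1}$ corresponds to some $r_{1} \in R$; by the definition of basic attack there exists $r_{2} \in P$ with $r_{1} \prec r_{2}$ and $\snot \shead(r_{2}) \in \sbodym{r_{1}}$, i.e.\ $r_{2}$ attacks $r_{1}$ in the sense of the attack relation on rules.

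It remains to place $r_{2}$ in the generating set of some answer set. Completeness of $\mathcal{B}$ together with self-consistency of $X$ and $Y \subseteq \sCn{P}{X}$ forces $Y$ to coincide with the objective literals of $\sCn{P}{X}$, which are exactly the least model of the reduct of $P$ with respect to $Y \cup X$; hence $S' := Y \cup X$ is an answer set of $P$. The rules Q1--Q6 enlarge the attacker only through R1, R2, R3, so the basic structure $\mathcal{R}_{2}$ must appear in some derivation of $\mathcal{B}$, and therefore $r_{2}$ belongs to the generating set $R'$ of $S'$. Then $R'$ attacks $R$ through $(r_{2}, r_{1})$, contradicting the assumption that $R$ is warranted. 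The main obstacle will be this final step: verifying precisely that any derived complete argumentation structure corresponds to an answer set, and that the basic attacker $\mathcal{R}_{2}$ of $\tau$ can always be recovered among the generating rules of that answer set.
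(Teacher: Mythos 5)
Your proposal is correct and follows the same route as the paper: the paper's own proof of this theorem is a single sentence asserting that a warranted derivation of $\langle S^{+} \hookleftarrow S^{-} \rangle$ exists, and your explicit construction (unfold positive bodies with R1, amalgamate with R2, pad with R3, then trace any blocking attack back to its basic attack, whose attacked component must be a basic structure for a rule of $R$) supplies exactly the content the paper omits. The obstacle you flag at the end is genuine but fillable with material consistent with what the paper asserts elsewhere: for a complete argumentation structure $\langle Y \hookleftarrow X \rangle$, completeness together with self-consistency of $X$ forces $Y = (\sCn{P}{X})^{+}$ and $X = \snot (\sObj \setminus Y)$, which is precisely the condition for $Y \cup X$ to be an answer set; and since Q2, Q3, Q5 enlarge the attacker only through R1--R3, whose assumption sets accumulate and whose $Z$-components can only be emptied by deriving their members from the final assumptions, one gets $\sbodym{r_2} \subseteq X$ and $\sbodyp{r_2} \subseteq Y$, hence $r_2$ is a generating rule of the blocker's answer set and the warrantedness of $R$ is contradicted. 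The paper addresses neither point in its proof, so your version is strictly more complete while being the same argument in outline.
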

\begin{proof}
Let $R$ be a set of rules generating an answer set $S$. If $R$ is a warranted set of generating rules, then there is a derivation of the argumentation structure $\langle S^{+} \hookleftarrow S^{-} \rangle$ which is warranted.
\end{proof}

\section{Conclusions}

An argumentation framework has been constructed, which enables transferring attacks of rules to attacks of argumentation structures and, consequently, to warranted complete argumentation structures. Preferred answer sets correspond to warranted complete argumentation structures.
This construction enables a selection of a preferred answer set whenever there is a non-empty set of standard answer sets of a program. 

We did not accept the second principle from \cite{be} and we needed to modify their first principle.
 On the other hand, new principles, which reflect the role of blocking, have been proposed. We stress the role of blocking -- in our approach, rules can be blocked by more preferred rules, but the rules which are not blocked are handled in a declarative style. 

Among goals for our future research are first of all a thorough analysis of properties and weaknesses of the presented approach (supported by an implementation of the derivation rules) and a detailed comparison to other approaches.

Finally, we have to mention the main differences between the preliminary version \cite{nmr08} and this paper. A more subtle set of attack derivation rules is introduced.
A new assumption in Q3 ($\mathcal{A}_3$ is not attacked) changed the set of attacked derivations and, consequently, our semantics. A new and more adequate notion of warranted and blocked argumentation structure is introduced, which is based on new concepts of warranted and blocked derivations. Consequently, the notion of preferred answer set is changed. A connection of attacks between argumentation structures to different derivations of argumentation structures was not expressed in \cite{nmr08}. More precise and appropriate formulations of Principles IV and I are presented. 








{\bf Acknowledgements:} We are grateful to anonymous referees for very valuable comments and proposals. This paper was supported by the grant 1/0689/10 of VEGA.

\bibliographystyle{abbrv}
\bibliography{preferences.bib}

\include{bibliography}

\end{document}